\theoremstyle{remark}
\newtheorem{lemma}{Lemma}
\newtheorem{theorem}{Theorem}
\newtheorem{corollary}{Corollary}
\tikzstyle{arw}=[->,>=latex]
\tikzstyle{node}=[rectangle,draw,outer sep=0pt,minimum width=1.7cm, minimum height=8mm]
\newcommand{\rank}{{\mbox{rank}}}
\newcommand{\E}{{\mathbb{E}}}
\newcommand{\br}[1]{{\left(#1\right)}}
\newcommand{\setbr}[1]{{\left\{#1\right\}}}
\newcommand{\sq}[1]{{\left[#1\right]}}
\newcommand{\R}{{\mathbb{R}}}
\newcommand{\No}{{\mathcal{N}}}
\newcommand{\diag}{{\mbox{diag}}}
\newcommand{\inv}{{^{-1}}}
\newcommand{\la}{{\lambda}}
\newcommand{\Si}{{\Sigma}}
\newcommand{\X}{{\bar{X}}}
\newcommand{\Y}{{\bar{Y}}}
\newcommand{\Xt}{{\tilde{X}}}
\newcommand{\Yt}{{\tilde{Y}}}
\newcommand{\As}{\textsf{A}~}
\newcommand{\Bs}{\textsf{B}~}
\newcommand{\Es}{\textsf{E}~}
\newcommand{\0}{{\mathbf{0}}}
\newcommand{\es}{{\emptyset}}
\begin{document}

\sloppy

\title{Gaussian Secure Source Coding and Wyner's Common Information} 

\author{
  \IEEEauthorblockN{Sanket Satpathy and Paul Cuff}
  \IEEEauthorblockA{Dept. of Electrical Engineering\\
    Princeton University\\
    Princeton, USA\\
    Email: \{satpathy,cuff\}@princeton.edu}
}

\maketitle

\begin{abstract}
  We study secure source-coding with causal disclosure, under the Gaussian distribution.  The optimality of Gaussian auxiliary random variables is shown in various scenarios. We explicitly characterize the tradeoff between the rates of communication and secret key.  This tradeoff is the result of a mutual information optimization under Markov constraints.  As a corollary, we deduce a general formula for Wyner's Common Information in the Gaussian setting.
\end{abstract}


\section{Introduction}

There is a growing body of work in secure source coding \cite{vp,gund,tand,kc1,kc2,Cuff3,rdss}. 
Most of the problem formulations consider distortion at the legitimate receiver and equivocation at the eavesdropper. An alternative approach was proposed by Yamamoto \cite{yam1,yam2} which replaced the eavesdropper's equivocation with the distortion incurred by the eavesdropper's best estimate of the information source.  The motivation behind this formulation is a purely operational approach to the problem of secrecy.  The choice of distortion function may depend on the context in which secrecy is desired.

Recently, the problem posed by Yamamoto was solved \cite{Cuff5} and considerably generalized \cite{Cuff3,rdss,Cuff4}.  The salient feature of the new approach is the causal disclosure of information to the eavesdropper.  There are compelling arguments that this disclosure is necessary for a robust notion of secure communication \cite{rdss}.  This formulation of secrecy is natural when understood in a game-theoretic context. A repeated game is being played by the adversary versus the communication system. Distortion is replaced by a payoff function, while the information sequences equate to actions of the players.

Remarkably, when the payoff is chosen to be the log-loss function \cite{court}, the above framework recovers results for (normalized) equivocation-based secrecy \cite{eq}. Under this choice of payoff, the adversary expresses her belief about the distribution of the current information symbol, given her knowledge of past symbols.  Thus, the secure source coding framework of \cite{rdss} generalizes traditional approaches to secrecy.  We note that with this generality, new challenges arise in certain contexts involving uncoded side information at the receiver \cite{pt1,pt2,me}.

\tikzstyle{block} = [draw, fill=blue!20, rectangle, 
    minimum height=2em, minimum width=4em]
\tikzstyle{sum} = [draw, fill=blue!20, circle, node distance=1cm]
\tikzstyle{input} = [coordinate]
\tikzstyle{output} = [coordinate]
\tikzstyle{pinstyle} = [pin edge={to-,thin,black}]

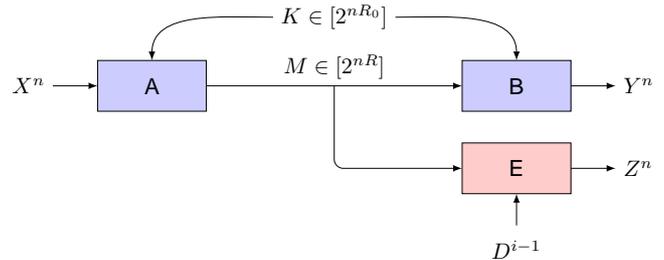
\begin{figure}[ht]
\begin{center}
 \resizebox{3.4in}{!}{\begin{tikzpicture}
 [node distance=1cm,minimum width=1cm,minimum height =.75 cm]
  \node[rectangle,minimum width=5mm] (source) {$X^n$};
  \node[node, fill=blue!20] (alice) [right =7mm of source] {\As};
  \node[node, fill=blue!20] (bob) [right =4cm of alice] {\Bs};
  \node[coordinate] (dummy) at ($(alice.east)!0.5!(bob.west)$) {};
  \node[rectangle,minimum width=5mm] (xhat) [right =7mm of bob] {$Y^n$};
  \node[rectangle,minimum width=7mm] (key) [above =7mm of dummy] {$K\in[2^{nR_0}]$};
  \node[node, fill=red!20] (eve) [below =5mm of bob] {\Es};
  \node[rectangle,minimum width=5mm] (zn) [right =7mm of eve] {$Z^n$};
  \node[rectangle] (side) [below=5mm of eve] {$D^{i-1}$};
  
  

  \draw [arw] (source) to (alice);
  \draw [arw] (alice) to node[minimum height=6mm,inner sep=0pt,midway,above]{$M\in[2^{nR}]$} (bob);
  \draw [arw] (bob) to (xhat);
  \draw [arw] (key) to [out=180,in=90] (alice);
  \draw [arw] (key) to [out=0,in=90] (bob);
  \draw [arw,rounded corners] (dummy) |- (eve);
  \draw [arw] (eve) to (zn);
  \draw [arw] (side) to (eve);
 \end{tikzpicture}}
 \caption{\small The causal disclosure framework for secure source coding \cite{rdss}. Disclosures $D=(D_{x},D_{y})$ are allowed, with arbitrary orthogonal disclosure channels $P_{D_{x},D_{y}|XY}=P_{D_{x}|X}P_{D_{y}|Y}$.}
 \label{pstsh}
 \end{center}
 \end{figure}
 
 Now, we recall the main result of \cite{rdss} derived under this framework:  The optimal tradeoff between communication rate $R$, secret key rate $R_0$ and average payoff $\Pi$ is given by the union of regions
 \begin{align}
R&\ge I(X;U,V),\label{sc}\\
R_0&\ge I(D_x,D_y;V|U),\label{sc1}\\
\Pi&\le \min_{z(\cdot)}\E\sq{\pi(X,Y,z(U))},\label{sc2}
\end{align}
where the union is taken over distributions that enforce the Markov chain $D_x-X-(U,V)-Y-D_y$.  Though the presentation in \cite{rdss} restricts itself to the case when all random variables have finite alphabets, most of the results (lossless communication is an exception) easily generalize to continuous random variables.

The scheme used to obtain the above region admits a simple interpretation.  The information source $X^n$ is split into two parts:  a secure part $V^n$ and a non-secure part $U^n$.  The eavesdropper is given full knowledge of $U^n$, while the secret key is focused on keeping $V^n$ perfectly secure.  This can be implemented using a superposition code \cite{rdss}.

Unlike $V$, the variable $U$ plays a specific and concrete role in the secure communication system as the information that is leaked to the eavesdropper, discussed in \cite[Section VI-F]{rdss}.  That is, the significance of the distribution of $U$ is more than simply that of an optimization parameter for the region (1)-(3).

This paper asks the following question:  Given that $P_{X,Y,U}$ is Gaussian (this fixes the bound on $\Pi$), can the communication-key tradeoff be realized with Gaussian $P_{V|X,Y,U}$?  Our primary motivation for this investigation is a potential application to the problem of secure rate-limited control.  In the context of control,  $Y^n$ may be a Gaussian control signal that is correlated with the state process $X^n$, and $U^n$ is a Gaussian degradation of the control that is leaked to the eavesdropper.

Classical control theory \cite{bert} provides exact characterizations of control performance for Gauss-Markov processes.  If the relevant rates are optimized by Gaussian distributions, then we can replace rate-limited feedback links with idealized Gaussian channels and use these characterizations to derive tight bounds on performance. This observation has already been used by Tatikonda-Mitter-Sahai \cite{tat,tat1} to characterize optimal performance in rate-limited control with quadratic costs.  

It is worth pointing out that if the optimization is carried out jointly over $(U,V)$ satisfying the Markov constraint, then Gaussian $P_{U,V|X,Y}$ does not suffice to achieve the entire rate-payoff region even when $\pi(\cdot,\cdot,\cdot)$ is a quadratic function \cite{eva}.  However, the Gaussianity of $U$ is motivated by operational considerations derived from the coding scheme described above.  Since $U^n$ represents information that is revealed to the eavesdropper, this is conveniently modeled in many applications by a linear/additive channel from $X^n$ or $Y^n$ to $U^n$.  Such degradations can be often be realized by physical processes (e.g. optical, electrical).  Further theoretical justification is provided by the worst-additive-noise-lemma \cite{gauss1,digg} when the payoff is pointwise mutual information $\pi(y,z)=-\log \frac{p(y)p(z)}{p(y,z)}$.

Besides the potential application to secure control, the problem we consider is interesting in its own right as a mutual information optimization under unusual Markov constraints.  There has been much effort in the information theory community focused on proving optimality of Gaussian random variables for various applications \cite{gauss1,digg,gauss2}.  We remark that recent techniques \cite{gauss2} designed to prove the optimality of Gaussian auxiliaries seem to be best suited to cases where the optimization is over random variables at the extremes of Markov chains \cite{courtade}.  It is unclear if the method can be adapted to our setting, where the optimization is over an auxiliary in the middle of a Markov chain.

Our approach will be a strengthening of the estimation-theoretic technique used to compute the common information of a bivariate Gaussian distribution in \cite{biao1} (the result first appeared in \cite{biao2}, but the proof had a gap that was later corrected).  As a corollary, we deduce a general formula for Wyner's common information in the Gaussian setting.  This quantity has proved to be fundamental in various source coding problems \cite{biao1,Wyner,DCS,reverse2,lossyci}, although most of these results consider sources with a finite alphabet.

\section{Notation}


We represent both random variables and probability distribution functions with capital letters, but only letters $P$ and $Q$ are used for the latter. The set of real numbers is denoted by $\R$, while $\R_+$ denotes non-negative reals.  We denote the conditional distribution of the random variable $Y$ given the random variable $X$ by $P_{Y|X}(y|x)$. This is the usual notation,  although sometimes we do abbreviate it as $P_{Y|X}$. Markov chains are denoted by $X-Y-Z$ implying the factorization $P_{X,Y,Z}=P_{X,Y}P_{Z|Y}$ while $X \perp Y$ indicates that the random variables $X$ and $Y$ are independent.
Sequences of random variables $X_1,\ldots,X_n$ are denoted by $X^n$.

Let $\diag(\setbr{a_i}_{i=1}^r)\in\R^{r\times r}$ denote a diagonal matrix $A$ with diagonal entries $A_{ii}=a_i$. The transpose of a matrix $A$ is denoted by $A^T$ and $A^{-T}=(A\inv)^T$.  If $X\in\R^d$ is a $d$-dimensional (column) vector, then $X_{:i}$ denotes the vector formed by the first $i$ components of $X$. 

We denote the covariance matrix of zero-mean random vectors $X\in\R^d$ by $\Si_X:=\E[XX^T]\in\R^{d\times d}$. When $d=1$, we set $\sigma_X^2:=\Si_X$.  For zero-mean random vectors $X,Y\in\R^d$, the cross-covariance matrix is denoted by $\Si_{XY}:=\E[XY^T]\in\R^{d\times d}$. Note that $\Si_{XY}=\Si_{YX}^T$.

Recall that $\Si_X$ is real, symmetric and positive semi-definite.  Its eigen-decomposition is given by $\Si_X=B_X\Lambda B_X^T$.  Let $r_X:=\rank(\Si_X)$.
We set $\Si_X^{1/2}=B_X\Lambda^{1/2} B_X^T$ and $\Si_X^{-1/2}=B_X\Lambda^{-1/2} B_X^T$, with 
\begin{equation}
\Lambda^{-1/2}=\br{\begin{array}{c c}
\Lambda_+^{-1/2}&\0\\
\0&\0
\end{array}},
\end{equation} 
where $\Lambda_+\in\R^{r_X\times r_X}$ is the submatrix of $\Lambda$ with strictly positive diagonal entries.  Note that we have
\begin{equation}
\Si_X^{-1/2}\Si_X\Si_X^{-1/2}=\br{\begin{array}{c c}
I_r&\0\\
\0&\0
\end{array}},
\end{equation}
where $I_r\in\R^{r\times r}$ is the identity matrix.

\section{Main Result}
In the following, we assume that $X,Y,U\in \R^d$ are jointly Gaussian random vectors.  There is no loss of generality in assuming the same length for all vectors since we can zero-pad shorter vectors.  For simplicity, we restrict to disclosures $D_x\in\setbr{\es,X}$ and $D_y\in\setbr{\es,Y}$.

\begin{theorem}
For jointly Gaussian $X,Y,U\in \R^d$, the region \eqref{sc}-\eqref{sc2} is optimized by Gaussian $P_{V|XUY}$, such that $X-(U,V)-Y$ holds.  In particular, we have the following communication-key tradeoffs:
\begin{itemize}
\item Arbitrary $P_{D_x|X}$ and $D_y=\es$:
\begin{align}
R&\ge I(X;U,Y),\\
R_0&\ge I(D_x;Y|U),
\end{align}

\item $D_x=\es,D_y=Y$:
\begin{align}
R&\ge \frac{1}{2}\sum_{i=1}^d\E_U\sq{\log\frac{1}{(1-\rho_{XU,i}^2)(1-a_{\la,i}^U)}},\\
R_0&\ge \frac{1}{2}\sum_{i=1}^d\E_U\sq{\log\frac{1}{\br{1-\frac{\rho_{XY|U,i}^2}{a_{\la,i}^U}}}},
\end{align}
with 
\begin{equation}
a_{\la,i}^U=\frac{(\la-1)\rho_{XY|U,i}^2+\sqrt{4\la\rho_i^2+(1-\la)^2\rho_{XY|U,i}^4}}{2\la},\label{a1}
\end{equation}

\item $D_x=X,D_y=Y$:
\begin{align}
R&\ge \frac{1}{2}\sum_{i=1}^d\E_U\sq{\log\frac{1}{(1-\rho_{XU,i}^2)(1-a_{\la,i}^U)}},\label{res1}\\
R_0&\ge \frac{1}{2}\sum_{i=1}^d\E_U\sq{\log\frac{1-\rho_{XY|U,i}^2}{(1-a_{\la,i}^U)\br{1-\frac{\rho_{XY|U,i}^2}{a_{\la,i}^U}}}},\label{res2}
\end{align}
with 
\begin{equation}
a_{\la,i}^U=\frac{\la\rho_{XY|U,i}^2+\rho_{XY|U,i}\sqrt{\la^2\rho_{XY|U,i}^2+4(\la+1)}}{2(\la+1)},\label{a2}
\end{equation}
\end{itemize}
where $\setbr{\rho_{XU,i}}_{i=1}^d$ are singular values of $\Si_{X}^{-1/2}\Si_{XU}\Si_{U}^{-1/2}$, $\setbr{\rho_{XY|U,i}}_{i=1}^d$ are singular values of $\Si_{X|U}^{-1/2}\Si_{XY|U}\Si_{Y|U}^{-1/2}$ and the tradeoffs are parametrized by $\la\in[0,\infty)$.
\label{thm}\end{theorem}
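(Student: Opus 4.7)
The first bullet is nearly immediate: the Markov chain $X-(U,V)-Y$ together with the fact that $D_x$ is a function of $X$ gives, conditionally on $U$, the Markov chain $D_x-X-V-Y$. Data processing then yields $I(X;U,V)\ge I(X;U,Y)$ and $I(D_x;V|U)\ge I(D_x;Y|U)$, with both bounds attained by the (trivially Gaussian) choice $V=Y$. The bulk of the argument targets the second and third bullets. I would dualize: for each $\la\ge 0$, minimize $\la\cdot I(X;U,V)+R_0(V)$ over $V$ satisfying $X-(U,V)-Y$, where $R_0(V)=I(Y;V|U)$ in the second bullet and $R_0(V)=I(X,Y;V|U)$ in the third, and show that the infimum is achieved by a Gaussian $V$ with the stated closed form.

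\textbf{Canonical decomposition.} Since mutual information and the Markov structure are invariant under invertible linear maps applied separately to $X$ and $Y$, I would first whiten using $\Si_{X|U}^{-1/2}$ and $\Si_{Y|U}^{-1/2}$, then rotate by the SVD of the conditional cross-covariance so that in the new coordinates the pairs $(X_i,Y_i)$ are conditionally uncorrelated (hence, by Gaussianity, independent) across $i$ with scalar correlation $\rho_{XY|U,i}$. Because $(X,Y,U)$ is jointly Gaussian, the conditional covariances used here do not depend on the realization of $U$, so a single deterministic linear map suffices, and the Lagrangian decouples into $d$ parallel scalar subproblems.

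\textbf{Scalar converse via MMSE.} In direction $i$, for any candidate $V$ define $a(u)\defeq 1-\E\sq{\mathrm{var}(X_i\mid U,V)\mid U=u}$ and $b(u)\defeq 1-\E\sq{\mathrm{var}(Y_i\mid U,V)\mid U=u}$. Gaussian maximum entropy applied pointwise in $v$, followed by Jensen's inequality, gives
\begin{equation*}
I(X_i;V\mid U=u)\ge \tfrac{1}{2}\log\tfrac{1}{1-a(u)},\quad I(Y_i;V\mid U=u)\ge \tfrac{1}{2}\log\tfrac{1}{1-b(u)}.
\end{equation*}
The Markov chain $X_i-(U,V)-Y_i$ forces $\mathrm{cov}(X_i,Y_i\mid U,V)=0$ almost surely, so the law of total covariance combined with Cauchy--Schwarz yields $\rho_{XY|U,i}^2\le a(u)b(u)$. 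Substituting the binding constraint $b(u)=\rho_{XY|U,i}^2/a(u)$ and minimizing the resulting scalar Lagrangian over $a(u)\in[\rho_{XY|U,i}^2,1]$ reduces to a quadratic whose positive root is exactly $a_{\la,i}^U$ as given in \eqref{a1} (second bullet) or \eqref{a2} (third bullet).

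\textbf{Achievability and main obstacle.} Matching achievability is obtained by constructing $V$ componentwise in canonical coordinates as $V_i=\mu_i X_i+\nu_i Y_i+N_i$ with independent Gaussian $N_i$, and tuning $(\mu_i,\nu_i,\mathrm{var}(N_i))$ so that $\mathrm{cov}(X_i,Y_i\mid U,V_i)=0$ (which secures the Markov chain via Gaussianity) and $a(u)=a_{\la,i}^U$; the resulting joint law is Gaussian and attains the Lagrangian bound with equality. The main obstacle is the converse for general, possibly non-Gaussian $V$: standard EPI-type extremal arguments work when the auxiliary lies at an endpoint of the Markov chain, but here $V$ sits in the middle and the usual machinery does not apply. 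The saving observation, which strengthens the estimation-theoretic technique of \cite{biao1}, is that because $(X,Y)\mid U$ is jointly Gaussian, the Markov constraint on $V$ reduces to the purely second-moment statement $\mathrm{cov}(X,Y\mid U,V)=0$, which interacts tightly with Cauchy--Schwarz and Gaussian maximum entropy to close the converse directionwise.
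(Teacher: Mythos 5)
Your proposal is correct and follows essentially the same route as the paper: whiten and rotate by the SVD of the conditional correlation matrix to decouple the Lagrangian into scalar subproblems; in the converse, lower-bound each $I(X_i;V)$ and $I(Y_i;V)$ by the Gaussian rate-distortion bound (your max-entropy-plus-Jensen step), couple the two MMSEs via the Cauchy--Schwarz consequence of the Markov chain (the paper's Lemma~1, which is your law-of-total-covariance observation), and optimize the resulting one-variable function; then match with a parallel Gaussian construction. The only cosmetic differences are that the paper synthesizes $(\Xt,\Yt)$ from a latent $V$ rather than defining $V$ as a noisy linear combination of $(X,Y)$, and that it treats the $U$-conditioning implicitly rather than remarking that $\Si_{X|U}$ is deterministic --- neither affects the substance.
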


\subsection{Interpretation}

The parameter $\la$ specifies the $(R,R_0)$ point that is tangent to a supporting line with slope $-\la\inv$.  Equations \eqref{a1} and \eqref{a2} precisely capture the path traced by the optimizing channels $P_{X|U,V}$ and $P_{Y|U,V}$ as the line is varied.
Note that the first case ($D_y=\es$) immediately follows from the data-processing inequality.

The above results are expressed in terms of the singular values of the \emph{correlation matrix} $\rho_{XY}:=\Si_X^{-1/2}\Si_{XY}\Si_Y^{-1/2}$. Recall that the linear MMSE estimator is given in terms of this matrix (for zero-mean random variables) as
\begin{equation}
\E[X|Y]=\Si_X^{1/2}\rho_{XY}\Si_Y^{-1/2}Y.
\end{equation}
In the scalar case, this simplifies to
\begin{equation}
\E[X|Y]=\rho_{XY}\frac{\sigma_X}{\sigma_Y}Y,
\end{equation}
where $\rho_{XY}$ is the correlation coefficient.

We also generalize a result of \cite{biao1}, which considered the case of scalar Gaussian random variables.

\begin{corollary}
For jointly Gaussian $X,Y\in\R^d$, Wyner's common information is given by
\begin{equation}
C(X;Y):=\min_{U:X-U-Y}I(X,Y;U)=\frac{1}{2}\sum_{i=1}^d\log\frac{1+\rho_i}{1-\rho_i},
\end{equation}
where $\setbr{\rho_{i}}_{i=1}^d$ are singular values of $\Si_{X}^{-1/2}\Si_{XY}\Si_{Y}^{-T/2}$.
\end{corollary}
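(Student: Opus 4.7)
The plan is to deduce the corollary from Theorem~\ref{thm} by specializing its third case ($D_x = X$, $D_y = Y$) to a trivial (constant) auxiliary $U$. With $U$ degenerate, the Markov constraint $X - (U,V) - Y$ collapses to $X - V - Y$, and the key-rate bound \eqref{res2} simplifies to $R_0 \ge I(X,Y;V)$. Consequently, the infimum of $R_0$ over admissible $V$ is, by definition, Wyner's common information $C(X;Y)$, and Theorem~\ref{thm} supplies the nontrivial ingredient: the minimizer can be taken jointly Gaussian with $(X,Y)$. This converts the original optimization over arbitrary auxiliaries into a finite-dimensional parametric problem.

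I would then plug $U = \mathrm{const}$ into the parametric formula \eqref{res2}. The conditional singular values $\rho_{XY|U,i}$ reduce to $\rho_i$, the outer expectation disappears, and we are left with
\begin{equation*}
C(X;Y) = \min_{\lambda \ge 0}\,\frac{1}{2}\sum_{i=1}^d \log\frac{1 - \rho_i^2}{(1 - a_{\lambda,i})\bigl(1 - \rho_i^2/a_{\lambda,i}\bigr)},
\end{equation*}
with $a_{\lambda,i}$ given by \eqref{a2}. Since each summand depends on $\lambda$ only through $a_{\lambda,i}$, the $d$ terms decouple into independent scalar minimizations of $f(a) = (1-\rho^2)/[(1-a)(1-\rho^2/a)]$ over the admissible range $a \in (\rho^2, 1)$.

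The remaining scalar problem is a short calculus exercise. Logarithmic differentiation yields the first-order condition $\tfrac{1}{a} + \tfrac{1}{1-a} = \tfrac{1}{a-\rho^2}$, which simplifies to $a^2 = \rho^2$; the admissible root is $a = \rho$. Substituting back gives $f(\rho) = (1+\rho)/(1-\rho)$, and summing over $i$ recovers the claimed expression. A sanity check against \eqref{a2} confirms that $a = \rho$ is attained at $\lambda = 0$, consistent with the interpretation that the vertical supporting line (slope $-\lambda^{-1} = -\infty$) isolates the minimum-$R_0$ corner of the region.

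The only real obstacle is the initial reduction: one must verify that Theorem~\ref{thm} genuinely covers the degenerate case of a constant $U$, so that its Gaussian-optimality conclusion rigorously restricts the a priori unrestricted optimization over $V$ to Gaussian candidates. Once this step is granted, everything else is either a direct quotation of Theorem~\ref{thm} or elementary one-variable calculus.
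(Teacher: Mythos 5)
Your proof is correct and follows the route the paper intends: Wyner's common information is the $\lambda=0$ point of the $(D_x,D_y)=(X,Y)$ tradeoff with a degenerate $U$, where the key-rate objective reduces to $\min_{V:X-V-Y} I(X,Y;V)$ and the parametric formula \eqref{res2} evaluates, via $a_{0,i}=\rho_i$ from \eqref{a2}, to $\tfrac{1}{2}\sum_i\log\tfrac{1+\rho_i}{1-\rho_i}$. Two minor remarks. Your worry about the degenerate $U$ is not a real obstacle: the notational conventions are built for rank-deficient covariances, and more tellingly the proof of Theorem~\ref{thm} itself reduces, after conditioning on $U$, exactly to $\min_{P_{V|X,Y}:X-V-Y}\bigl(\lambda I(X;V)+I(X,Y;V)\bigr)$, which is literally the constant-$U$ problem; the achievability construction and the Cauchy--Schwarz converse then make no further reference to $U$, so at $\lambda=0$ they solve the Wyner problem outright. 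Also, the statement that the $d$ summands ``decouple into independent scalar minimizations'' over $a$ is slightly loose, since all $a_{\lambda,i}$ are driven by the single parameter $\lambda$; but the decoupled problems do give a valid lower bound, and since each scalar minimum $f(\rho_i)=(1+\rho_i)/(1-\rho_i)$ is attained at the same value $\lambda=0$, that lower bound is met with equality, so the argument stands.
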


In the following sections, we only present the proof for $(D_x,D_y)=(X,Y)$.  The proof for the second case is similar and thus, omitted.

\section{Proof}

Note that the communication rate $I(X;U,V)$ is minimized by $V=Y$, while the secret key rate $I(X,Y;V|U)$ is minimized by choosing $V$ such that $I(X,Y;V|U=u)=C(X;Y|U=u)$ for every $U=u$.  In general, it is not possible to minimize both rates simultaneously.

We consider the optimal frontier of rates by considering the point at which a supporting hyperplane touches the region.  In other words, we would like to show that
\begin{equation}
\underset{P_{V|XUY}:X-(U,V)-Y}{\mbox{argmin}}\br{\la I(X;U,V)+I(X,Y;V|U)}
\end{equation}
is minimized by a Gaussian distribution for $\la\ge0$.  We shall constructively show that a minimizer exists, so the above expression is well-defined.  In the following, we shall perform the analysis conditioned on $U$, so it suffices to consider the problem
\begin{equation}
\underset{P_{V|X,Y}:X-V-Y}{\mbox{argmin}}\br{\la I(X;V)+I(X,Y;V)}.
\end{equation}
Since $P_{X,Y}$ is Gaussian, linear-Gaussian $P_{V|X,Y}$ ensures that the joint distribution is Gaussian as well.  Note that the Gaussianity of $V$ is not necessary, since mutual information is invariant under invertible transformations.

\subsection{Diagonalization}

Consider $\X=\Si_X^{-1/2}X$ and $\Y=\Si_Y^{-1/2}Y$.  This is defined to mean that only the positive eigenvalues are inverted.  The zero eigenvalues remain zero.  However, this is still a matrix inverse i.e.\ $X = \Sigma_X^{1/2} \Sigma_X^{-1/2} X$ with probability one.
We have
\begin{align}
\Si_{\X}&=\br{\begin{array}{c c}
I_{r_X}&0\\
0&0_{d-r_X}
\end{array}},\\
\Si_{\Y}&=\br{\begin{array}{c c}
I_{r_Y}&0\\
0&0_{d-r_Y}
\end{array}}.
\end{align}

Also, we have
\begin{equation}
\Si_{\X\Y}=\Si_X^{-1/2}\Si_{XY}\Si_Y^{-1/2}=\br{\begin{array}{c c}
A_{X,Y}&0\\
0&0_{d-r_X,d-r_Y}
\end{array}},
\end{equation}
where $A_{X,Y}\in\R^{r_X\times r_Y}$.  By the singular value decomposition, we have
\begin{equation}
A_{X,Y}=B_X\Lambda B_Y,
\end{equation}
where $\Lambda\in\R_+^{r_X\times r_Y}$ is diagonal and $B_X\in\R^{r_X\times r_X}, B_Y\in\R^{r_Y\times r_Y}$ are orthogonal matrices.  Then with
\begin{align}
\Xt&=\br{\begin{array}{c c}
B_X&0\\
0&0_{d-r_X}
\end{array}}\X,\\
\Yt&=\br{\begin{array}{c c}
B_Y&0\\
0&0_{d-r_Y}
\end{array}}\Y,
\end{align}
we have $\Si_\Xt=\Si_\X$, $\Si_\Yt=\Si_\Y$ and
\begin{equation}
\Si_{\Xt\Yt}=\br{\begin{array}{c c}
\Lambda&0\\
0&0_{d-r_X,d-r_Y}
\end{array}},
\end{equation}
where the non-zero diagonal entries of $\Lambda$ are $\setbr{\rho_i}_{i=1}^r$ ($\rho_i>0$), the singular values of the \emph{correlation matrix} $\Si_X^{-1/2}\Si_{XY}\Si_Y^{-1/2}$.

Thus, we have constructed invertible linear transformations $X\mapsto \Xt$ and $Y\mapsto \Yt$ such that $\Si_\Xt, \Si_\Yt$ and $\Si_{\Xt\Yt}$ are diagonal.  Since mutual information is invariant to invertible transformations, it suffices to show that
\begin{equation}
\arg\min_{P_{V|\Xt\Yt}:\Xt-V-\Yt}\br{\la I(\Xt;V)+I(\Xt,\Yt;V)}
\end{equation}
is Gaussian.

\subsection{Achievability Proof}

Let $r:=\min(r_X,r_Y)$.  Consider independent random variables $V,Z_1,Z_2\sim\No(0,I_r)$. Let $A_\la=\diag(\setbr{a_{\la,i}}_{i=1}^r)$ and $B_\la=\diag(\setbr{b_{\la,i}}_{i=1}^r)$. Consider ($0\le a_{\la,i},b_{\la,i} \le 1$ for all $i$)
\begin{align}
\Xt_{:r}&=\sqrt{A_\la}V+\sqrt{I-A_\la}Z_1\\
\Yt_{:r}&=\sqrt{B_\la}V+\sqrt{I-B_\la}Z_2,
\end{align}
with
$\sqrt{a_{\la,i} b_{\la,i}}=\rho_i$, so that $P_{\Xt\Yt}$ is realized.  It suffices to generate the remaining $(d-r)$ components of $\Xt$ and $\Yt$ independent of $V$. Note that we have $\Xt-V-\Yt$ and $(\Xt,\Yt)\sim P_{\Xt\Yt}$ under the above construction.

Under this distribution, we have (since $\Xt_i-V_i-\Yt_i~ \forall i$)
\begin{align}
&~\quad\la I(\Xt;V)+I(\Xt\Yt;V)\\
&=\la I(\Xt_{:r};V)+I(\Xt_{:r}\Yt_{:r};V)\\
&=\sum_{i=1}^r\br{\la I(\Xt_i;V_i)+I(\Xt_i\Yt_i;V_i)}\\
&=\sum_{i=1}^r\br{(\la+1) I(\Xt_i;V_i)+I(\Yt_i;V_i)-I(\Xt_i;\Yt_i)}\\
&=\sum_{i=1}^r\br{\frac{1}{2}\log\frac{1}{(1-a_{\la,i})^{\la+1}(1-b_{\la,i})}-\frac{1}{2}\log\frac{1}{(1-\rho_i^2)}}\\
&=\sum_{i=1}^r\br{\frac{1}{2}\log\frac{a_{\la,i}}{(1-a_{\la,i})^{\la+1}(a_{\la,i}-\rho_i^2)}-\frac{1}{2}\log\frac{1}{(1-\rho_i^2)}}.
\end{align}
Now, set
\begin{equation}
a_{\la,i}=\frac{\la\rho_i^2+\rho_i\sqrt{\la^2\rho_i^2+4(\la+1)}}{2(\la+1)}\label{adef1}
\end{equation}
to achieve \eqref{res1}-\eqref{res2}. It is easy to check that $\rho_i^2\le a_{\la,i}\le 1$, which respects the correlation constraint $\sqrt{a_{\la,i} b_{\la,i}}=\rho_i$.
Note that $a_{0,i}=\rho_i$, which recovers the construction of \cite{biao1}.  Also, $\lim_{\la\to\infty}a_{\la,i}=\rho_i^2$, which reflects the fact that $V_i= \Yt_i$ with probability 1 as $\la\to\infty$.

\subsection{Converse Proof}

The following lemma shall be crucial in establishing the optimality of our construction.  This is essentially the Cauchy-Schwarz inequality.  In \cite{biao1}, the AM-GM inequality and the orthogonality principle in optimal estimation were used in the converse proof.  This approach does not work here due to the asymmetry introduced by the parameter $\la$.

\begin{lemma}
For unit variance random variables $X,Y\in\R$ and any $P_{V|X,Y}$ such that $X-V-Y$,
we have
\begin{equation}
\rho^2=\E[XY]^2\le \E\sq{\E[X|V]^2}\E\sq{\E[Y|V]^2}
\end{equation}
\end{lemma}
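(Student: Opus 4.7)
The plan is to use the Markov condition $X - V - Y$ to factor the conditional expectation of $XY$, then apply Cauchy--Schwarz on the resulting product. Concretely, the Markov chain implies that $X$ and $Y$ are conditionally independent given $V$, so
\begin{equation}
\E[XY \mid V] = \E[X \mid V]\,\E[Y \mid V]
\end{equation}
almost surely. Taking outer expectations and using the tower property,
\begin{equation}
\E[XY] = \E\bigl[\E[X \mid V]\,\E[Y \mid V]\bigr].
\end{equation}

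Now I would apply the Cauchy--Schwarz inequality to the pair of random variables $\E[X \mid V]$ and $\E[Y \mid V]$, which are square-integrable by Jensen's inequality together with the assumption that $X$ and $Y$ have unit (hence finite) second moments. This yields
\begin{equation}
\E\bigl[\E[X \mid V]\,\E[Y \mid V]\bigr]^2 \le \E\bigl[\E[X \mid V]^2\bigr]\,\E\bigl[\E[Y \mid V]^2\bigr].
\end{equation}
Combining with the previous identity and using $\rho = \E[XY]$ gives
\begin{equation}
\rho^2 = \E[XY]^2 \le \E\bigl[\E[X \mid V]^2\bigr]\,\E\bigl[\E[Y \mid V]^2\bigr],
\end{equation}
which is precisely the claim.

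There is essentially no obstacle here: the key insight is simply recognizing that the Markov condition converts $\E[XY]$ into an inner product of the two conditional means, to which Cauchy--Schwarz applies directly. The zero-mean assumption is not required for the inequality itself (both sides remain finite and the bound holds), and the unit-variance normalization is only used to identify $\E[XY]$ with the correlation coefficient $\rho$ in the statement. Compared with the AM--GM plus orthogonality argument of \cite{biao1}, the advantage of Cauchy--Schwarz is that it does not impose any symmetry between $X$ and $Y$, which is exactly what is needed to accommodate the asymmetric weighting by $\la$ later in the converse.
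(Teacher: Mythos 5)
Your argument is exactly the paper's: use the Markov chain to factor $\E[XY\mid V]=\E[X\mid V]\E[Y\mid V]$, take outer expectations, and apply Cauchy--Schwarz to the two conditional means. Correct, and essentially identical to the proof given in the paper.
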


\begin{proof}
By $X-V-Y$ and the Cauchy-Schwarz inequality, we have
\begin{align}
\E[XY]^2&=\E_V[\E[XY|V]]^2\\
&=\E_V[\E[X|V]\E[Y|V]]^2\\
&\le\E[\E[X|V]^2]\E[\E[Y|V]^2].
\end{align}
\end{proof}

Consider any $P_{V|XY}$ such that $\Xt-V-\Yt$ (so $\Xt_i-V-\Yt_i$ holds for all $i$). Note that we don't make any structural assumptions on $V$ here.  Let $D_{\Xt_i}:=\E[(\Xt_i-\E[\Xt_i|V])^2]$ and $D_{\Yt_i}:=\E[(\Yt_i-\E[\Yt_i|V])^2]$.  Using standard information-theoretic inequalities, we have
\begin{align}
&\quad~\la I(\Xt;V)+I(\Xt\Yt;V)\\
&\ge\la I(\Xt_{:r};V)+I(\Xt_{:r}\Yt_{:r};V)\\
&=\la \sum_{i=1}^r I(\Xt_i;V|\Xt^{i-1})+\sum_{i=1}^r I(\Xt_i\Yt_i;V|\Xt^{i-1}\Yt^{i-1})\\
&=\la \sum_{i=1}^r I(\Xt_i;V,\Xt^{i-1})+\sum_{i=1}^r I(\Xt_i\Yt_i;V,\Xt^{i-1}\Yt^{i-1})\\
&\ge\sum_{i=1}^r\br{\la I(\Xt_i;V)+I(\Xt_i\Yt_i;V)}\\
&=\sum_{i=1}^r\br{(\la+1) I(\Xt_i;V)+I(\Yt_i;V)-I(\Xt_i;\Yt_i)}\\
&\ge\sum_{i=1}^r\br{(\la+1) I(\Xt_i;\E[\Xt_i|V])+I(\Yt_i;\E[\Yt_i|V])-I(\Xt_i;\Yt_i)}\label{dpi}\\
&\ge\sum_{i=1}^r\br{(\la+1) R_{\Xt_i}(D_{\Xt_i})+R_{\Yt_i}(D_{\Yt_i})-I(\Xt_i;\Yt_i)}\label{grd}\\
&=\sum_{i=1}^r\br{\frac{1}{2}\log\frac{1-\rho_i^2}{D_{\Xt_i}^{\la+1}D_{\Yt_i}}},\label{conv1}
\end{align}
where \eqref{dpi} follows from the data-processing inequality and \eqref{grd} follows from the definition of the Gaussian rate-distortion function $R_X(\cdot)$ \cite[Theorem 10.3.2]{Cover}.

Since $\Xt_i-V-\Yt_i$ holds for all $i$, using 
\begin{align}
D_{\Xt_i}&=\E[(\Xt_i-\E[\Xt_i|V])^2]\\
&=\E[\Xt_i^2]-\E[\E[\Xt_i|V]^2]\\
&=1-\E[\E[\Xt_i|V]^2],
\end{align}
and similarly
\begin{align}
D_{\Yt_i}&=1-\E[\E[\Yt_i|V]^2],
\end{align}
we have from Lemma 1 that (recall that $\E[\Xt_i\Yt_i]=\rho_i$)
\begin{align}
\rho_i^2&\le (1-D_{\Xt_i})(1-D_{\Yt_i})\\
\iff \rho_i^2 +D_{\Yt_i}(1-D_{\Xt_i})&\le (1-D_{\Xt_i})\\
\iff \frac{\rho_i^2}{1-D_{\Xt_i}} +D_{\Yt_i}&\le 1\\
\iff  D_{\Yt_i}&\le 1-\frac{\rho_i^2}{1-D_{\Xt_i}}.\label{cs}
\end{align}
Inserting \eqref{cs} into \eqref{conv1}, we find that minimizing \eqref{conv1} is equivalent to
\begin{align}
\mbox{maximize}&~~D_{\Xt_i}^{\la+1}\br{1-\frac{\rho_i^2}{1-D_{\Xt_i}}}=:f_\la(1-D_{\Xt_i}),
\end{align}
for $1\le i\le r$, where 
\begin{equation}
f_\la(x)=\frac{(1-x)^{\la+1}(x-\rho_i^2)}{x},\label{fdef1}
\end{equation}
and the maximization is carried out over functions $D_{\Xt_i}(\la)$, where $D_{\Xt_i}:\R_+\to[0,1]$.  Note that $D_{\Yt_i}(\la)\ge0$ and \eqref{cs} imply that $0\le D_{\Xt_i}(\la)\le 1-\rho_i^2\iff \rho_i^2\le (1-D_{\Xt_i}(\la))\le 1$.
In order to maximize the above expression, we set $f_\la'(1-D_{\Xt_i})=0$ to obtain
\begin{equation}
D_{\Xt_i}(\la)=1-\frac{\la\rho_i^2+\rho_i\sqrt{\la^2\rho_i^2+4(\la+1)}}{2(\la+1)}.
\end{equation}
Since $f(\rho_i^2)=f(1)=0$, $f(x)>0$ for $\rho_i^2<x<1$ and $f$ is smooth, this critical point must be the maximum.  Since the resulting value of $D_{\Xt_i}^{\la+1}D_{\Yt_i}$ was achieved by the Gaussian construction, we conclude that Gaussian auxiliaries suffice for achieving the optimal rate frontier.


\section*{Acknowledgment}

We would like to thank Jingbo Liu for insightful discussions. This work is supported by the National Science Foundation (grants CCF-1350595 and CCF-1116013) and the Air Force Office of Scientific Research (grant FA9550-12-1-0196).\vspace{-.1cm}


\bibliographystyle{ieeetr}

\bibliography{comminfo_isit}

\begin{thebibliography}{10}

\bibitem{vp}
J.~Villard and P.~Piantanida, ``Secure lossy source coding with side
  information at the decoders,'' in {\em Allerton}, pp.~733--739, 2010.

\bibitem{gund}
D.~Gunduz, E.~Erkip, and H.~Poor, ``Secure lossless compression with side
  information,'' in {\em ITW}, pp.~169--173, 2008.

\bibitem{tand}
R.~Tandon, S.~Ulukus, and K.~Ramchandran, ``Secure source coding with a
  helper,'' {\em IT, IEEE Trans on}, vol.~59, no.~4, pp.~2178--2187, 2013.

\bibitem{kc1}
K.~Kittichokechai, Y.-K. Chia, T.~Oechtering, M.~Skoglund, and T.~Weissman,
  ``Secure source coding with a public helper,'' in {\em ISIT}, pp.~2209--2213,
  2013.

\bibitem{kc2}
Y.-K. Chia and K.~Kittichokechai, ``On secure source coding with side
  information at the encoder,'' in {\em ISIT}, pp.~2204--2208, 2013.

\bibitem{Cuff3}
P.~Cuff, ``A framework for partial secrecy,'' in {\em GLOBECOM}, pp.~1--5,
  2010.

\bibitem{rdss}
C.~Schieler and P.~Cuff, ``Rate-distortion theory for secrecy systems,'' {\em
  Information Theory, IEEE Transactions on}, vol.~60, pp.~7584--7605, Dec 2014.

\bibitem{yam1}
H.~Yamamoto, ``A rate-distortion problem for a communication system with a
  secondary decoder to be hindered,'' {\em Information Theory, IEEE
  Transactions on}, vol.~34, pp.~835--842, Jul 1988.

\bibitem{yam2}
H.~Yamamoto, ``Rate-distortion theory for the shannon cipher system,'' in {\em
  Proceedings, IEEE International Symposium on Information Theory}, pp.~349--,
  Jun 1994.

\bibitem{Cuff5}
C.~Schieler and P.~Cuff, ``Secrecy is cheap if the adversary must
  reconstruct,'' in {\em ISIT}, pp.~66--70, 2012.

\bibitem{Cuff4}
P.~Cuff, ``Using a secret key to foil an eavesdropper,'' in {\em Allerton},
  pp.~1405 --1411, 29 2010-oct. 1 2010.

\bibitem{court}
T.~Courtade and T.~Weissman, ``Multiterminal source coding under logarithmic
  loss,'' in {\em ISIT}, pp.~761--765, 2012.

\bibitem{eq}
P.~Cuff, ``Optimal equivocation in secrecy systems a special case of
  distortion-based characterization,'' in {\em ITA, 2013}, pp.~1--3, 2013.

\bibitem{pt1}
P.~Cuff, ``State information in bayesian games,'' in {\em Allerton},
  vol.~abs/0911.0874, 2009.

\bibitem{pt2}
C.~Schieler, E.~C. Song, P.~Cuff, and H.~V. Poor, ``Source-channel secrecy with
  causal disclosure,'' in {\em Allerton Conference}, pp.~968--973, 2012.

\bibitem{me}
S.~Satpathy and P.~Cuff, ``Secure coordination with a two-sided helper,'' in
  {\em Information Theory (ISIT), 2014 IEEE International Symposium on},
  pp.~406--410, June 2014.

\bibitem{bert}
D.~P. Bertsekas, {\em Dynamic Programming and Optimal Control}.
\newblock Athena Scientific, 2nd~ed., 2000.

\bibitem{tat}
S.~Tatikonda, A.~Sahai, and S.~Mitter, ``Control of lqg systems under
  communication constraints,'' in {\em Decision and Control, 1998. Proceedings
  of the 37th IEEE Conference on}, vol.~1, pp.~1165--1170 vol.1, 1998.

\bibitem{tat1}
S.~Tatikonda, A.~Sahai, and S.~Mitter, ``Stochastic linear control over a
  communication channel,'' {\em Automatic Control, IEEE Transactions on},
  vol.~49, pp.~1549--1561, Sept 2004.

\bibitem{eva}
E.~Song, P.~Cuff, and H.~Poor, ``A bit of secrecy for gaussian source
  compression,'' in {\em Information Theory Proceedings (ISIT), 2013 IEEE
  International Symposium on}, pp.~2567--2571, July 2013.

\bibitem{gauss1}
T.~Liu and P.~Viswanath, ``An extremal inequality motivated by multiterminal
  information-theoretic problems,'' {\em Information Theory, IEEE Transactions
  on}, vol.~53, pp.~1839--1851, May 2007.

\bibitem{digg}
S.~Diggavi and T.~Cover, ``The worst additive noise under a covariance
  constraint,'' {\em Information Theory, IEEE Transactions on}, vol.~47,
  pp.~3072--3081, Nov 2001.

\bibitem{gauss2}
Y.~Geng and C.~Nair, ``The capacity region of the two-receiver gaussian vector
  broadcast channel with private and common messages,'' {\em Information
  Theory, IEEE Transactions on}, vol.~60, pp.~2087--2104, April 2014.

\bibitem{courtade}
T.~Courtade and J.~Jiao, ``An extremal inequality for long markov chains,'' in
  {\em Communication, Control, and Computing (Allerton), 2014 52nd Annual
  Allerton Conference on}, pp.~763--770, Sept 2014.

\bibitem{biao1}
G.~Xu, W.~Liu, and B.~Chen, ``Wyner's common information: Generalizations and
  {A} new lossy source coding interpretation,'' {\em CoRR}, vol.~abs/1301.2237,
  2013.

\bibitem{biao2}
G.~Xu, W.~Liu, and B.~Chen, ``Wyners common information for continuous random
  variables - a lossy source coding interpretation,'' in {\em Information
  Sciences and Systems (CISS), 2011 45th Annual Conference on}, pp.~1--6, March
  2011.

\bibitem{Wyner}
A.~Wyner, ``The common information of two dependent random variables,'' {\em
  IEEE Trans. IT.}, vol.~21, pp.~163--179, Sept. 1975.

\bibitem{DCS}
P.~Cuff, ``Distributed channel synthesis,'' {\em IEEE Trans on IT}, vol.~59,
  no.~11, pp.~7071--7096, 2013.

\bibitem{reverse2}
C.~Bennett, I.~Devetak, A.~Harrow, P.~Shor, and A.~Winter, ``The quantum
  reverse shannon theorem and resource tradeoffs for simulating quantum
  channels,'' {\em Information Theory, IEEE Transactions on}, vol.~60,
  pp.~2926--2959, May 2014.

\bibitem{lossyci}
K.~Viswanatha, E.~Akyol, and K.~Rose, ``The lossy common information of
  correlated sources,'' {\em Information Theory, IEEE Transactions on},
  vol.~60, pp.~3238--3253, June 2014.

\bibitem{Cover}
T.~M. Cover and J.~A. Thomas, {\em Elements of information theory (2. ed.)}.
\newblock Wiley, 2006.

\end{thebibliography}

\end{document}